\newtheorem{theorem}{Theorem}[section]
\newtheorem{lemma}[theorem]{Lemma}
\newtheorem{proposition}[theorem]{Proposition}
\DeclareMathOperator{\NN}{\mathbb{N}}
\DeclareMathOperator{\FF}{\mathbb{F}}
\title{Linear Binary Codes Correcting One or More Errors}
\author{
  Timofei Izhitskii\\
  \textit{faculty of computer science}\\
  HSE University\\
  Moscow, Russia\\
  \texttt{tmizhitskii@edu.hse.ru}
}
\date{}
\begin{document}

\maketitle

\begin{abstract}
This paper examines linear binary codes capable of correcting one or more errors. For the single‐error‐correcting case, it is shown that the Hamming bound is achieved by a constructive method, and an exact expression for the minimal codeword length is derived. For the general case, a simple lower bound for the parameters of linear codes is derived from an analysis of the coset structure.

\end{abstract}

\begin{center}
    \vspace{-0.5em}
    \small
    \textbf{2020 Mathematics Subject Classification.} 
    94B25 (Primary); 94B05, 94B65 (Secondary).\\[2pt]
    \textbf{Keywords and phrases.} 
    binary linear codes; error-correcting codes; Hamming bound; coset structure.
    \normalsize
\end{center}

\section{Introduction}
Let 
\(
F : \FF_2^k \longrightarrow \FF_2^n
\)
be a binary code with minimum distance $d$. 
The code can correct up to $\lfloor (d-1)/2 \rfloor$ errors. 
The mapping $F$ represents a code with parameters $(n,k,d)_2$ in general; 
when the code is linear, we write $[n,k,d]_2$. In this work the emphasis is on the linear case~\cite{MacWilliamsSloane,ref1}.

Denote by $N(k,e)$ the minimal value of $n$ for which there exists (possibly nonlinear) code correcting $e$ errors, and by $\mathrm{NL}(k,e)$ the minimal length $n$ among all linear codes. 
Clearly, $\mathrm{NL}(k,e) \ge N(k,e)$.

There exists an example where $N(k,e) < \mathrm{NL}(k,e)$. For $(k,e)=(8,2)$ the Nordstrom--Robinson code is a nonlinear $(16,8,6)$ binary code~\cite{NR67}.
It yields a nonlinear $(15,8,5)$ code (see also~\cite{MacWilliamsSloane}), hence $N(8,2)\le 15$. 
Moreover, by~\cite{Grassl,MacWilliamsSloane} one has $\mathrm{NL}(8,2)>15$.
Therefore,
\[
N(8,2)\le 15 < \mathrm{NL}(8,2).
\]

The first part of the paper is devoted to the case $e=1$.  
It is shown that the Hamming bound is attained for all $k$, that is,
\[
\mathrm{Ham}(k,1)=N(k,1),
\]
and an explicit closed formula is obtained:
\[
\mathrm{NL}(k,1)=N(k,1)
= k+1+\left\lfloor \log_2\!\bigl(k+1+\lfloor \log_2 k\rfloor\bigr)\right\rfloor.
\]

The second part focuses on linear codes for arbitrary $e \ge 1$.
An analysis of the coset structure of an $[n,k,2e+1]_2$ code leads to Theorem~\ref{th:tilted} and yields a computable lower bound
\[
\mathrm{NL}(k,e)\ge C(k,e).
\]
For most values of $k$ this coincides with the Hamming bound $\mathrm{Ham}(k,e)$, whereas for certain pairs $(k,e)$ it is strictly stronger; see Tables~\ref{tab:C2} and~\ref{tab:C3}.

\section{Preliminaries}

For $u,v \in \FF_2^n$, their \emph{Hamming distance} is defined by
\[
\rho(u,v) := |\{\,i \mid u_i \neq v_i\,\}|,
\]
that is, the number of coordinates in which $u$ and $v$ differ.  
The \emph{weight} of a vector $u$ is 
\[
w(u) := \rho(u,0),
\]
that is, the number of nonzero coordinates of $u$.

A \emph{linear binary code} of parameters $[n,k,d]_2$ is a $k$–dimensional subspace 
\(
C \subseteq \FF_2^n
\)
whose \emph{minimum distance}
\[
d := \min_{\substack{u,v \in C \\ u \ne v}} \rho(u,v)
\]
is the smallest Hamming distance between two distinct codewords in $C$.
Equivalently, since $C$ is linear, the minimum distance can be expressed as
\[
d = \min_{x \in C \setminus \{0\}} w(x).
\]

For a linear code $C \subseteq \FF_2^n$ and any vector $u \in \FF_2^n$, 
the \emph{weight of the coset} $u + C$ is defined by
\[
w(u + C) := \min_{x \in u + C} w(x).
\]

For every linear $[n,k,d]_2$–code $C$, there exists a matrix 
$
H \in M_{(n-k)\times n}(\FF_2),
$
of rank $n-k$ called a \emph{check matrix} such that
$
C = \{\,x \in \FF_2^n : Hx^T = 0\,\}.
$
A linear code can correct up to $e$ errors if and only if any $2e$ columns of $H$ are linearly independent~\cite[Ch.~1]{MacWilliamsSloane}, \cite{ref1}.

\begin{lemma}
\(
N(k,e) \;\le\; N(k+1,e).
\)
\end{lemma}

\begin{proof}
Consider an optimal code 
\(
F_{k+1}: \FF_2^{k+1} \;\longrightarrow\; \FF_2^{N(k+1,e)}
\)
that corrects $e$ errors. Then the code 
\[
F_k: \FF_2^k \;\longrightarrow\; \FF_2^{N(k+1,e)}, 
\qquad
(x_1,\dots,x_k)\;\mapsto\; F_{k+1}(x_1,\dots,x_k,0),
\]
also corrects $e$ errors. Hence $N(k,e)\le N(k+1,e)$.
\end{proof}

There exists a well-known estimate called the Hamming bound. All balls of radius $e$ centered at distinct codewords are required to be disjoint~\cite{Hamming1950,ref1}:
\[
\sum_{i=0}^{e} \binom{n}{i} \, 2^k \;\le\; 2^n.
\]
In particular, for the single–error–correcting case ($e=1$), this reduces to
\[
(1 + n)\, 2^k \;\le\; 2^n
\quad\Longleftrightarrow\quad
k \;\le\; n - \log_2(n+1).
\]
The smallest integer $n$ satisfying this inequality for given $k$ and $e$ is called the \emph{Hamming bound} and is denoted by $\mathrm{Ham}(k,e)$, formally defined as
\[
\mathrm{Ham}(k,e)
\;:=\;
\min\Bigl\{\,n \in \NN \;\Bigm|\;
\sum_{i=0}^{e} \binom{n}{i}\, 2^k \;\le\; 2^n
\Bigr\}.
\]

\section{The Single‐Error‐Correcting Case}

\begin{proposition}
The Hamming bound for one error is attained, that is, \(
\mathrm{Ham}(k,1) \;=\; N(k,1).
\)
\end{proposition}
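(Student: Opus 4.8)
The plan is to establish the two inequalities $N(k,1)\ge \mathrm{Ham}(k,1)$ and $N(k,1)\le \mathrm{Ham}(k,1)$ separately and thereby squeeze the two quantities together. The first inequality is essentially free: the sphere-packing argument recalled above shows that every (possibly nonlinear) one-error-correcting code of length $n$ with $2^k$ codewords must satisfy $(1+n)\,2^k\le 2^n$, so any admissible length is at least the smallest $n$ meeting this constraint, which is exactly $\mathrm{Ham}(k,1)$. The entire content of the proposition therefore lies in producing, for each $k$, a code of length precisely $n_0:=\mathrm{Ham}(k,1)$ that corrects one error; in fact I would build a \emph{linear} one, which simultaneously yields $\mathrm{NL}(k,1)=\mathrm{Ham}(k,1)$.

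For the upper bound I would construct a parity-check matrix directly, using the column criterion from the preliminaries: a binary linear code corrects one error iff any two columns of $H$ are linearly independent, i.e. iff all columns are nonzero and pairwise distinct. Put $m:=n_0-k$. By the definition of $\mathrm{Ham}(k,1)$ we have $(1+n_0)\,2^k\le 2^{n_0}$, which rearranges to $n_0\le 2^m-1$. Hence the space $\FF_2^m$ contains at least $n_0$ distinct nonzero vectors, and I may select any $n_0$ of them as the columns of a matrix $H\in M_{m\times n_0}(\FF_2)$. By the criterion, the code $C=\{x : Hx^T=0\}$ then has minimum distance at least $3$ and so corrects one error.

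The one point requiring care — and the only real obstacle — is to guarantee that $H$ has full rank $m$, so that $C$ has dimension exactly $n_0-m=k$ rather than something larger (a deficient rank would describe a longer-than-necessary code for the true dimension). I would handle this by placing the $m$ standard basis vectors $e_1,\dots,e_m$ of $\FF_2^m$ among the $n_0$ columns; this is possible because $m\le n_0$ and these vectors are nonzero and distinct, and it forces $\operatorname{rank}H=m$. The remaining $k=n_0-m$ columns may be taken as any further distinct nonzero vectors, of which enough exist precisely because $n_0\le 2^m-1$. The resulting object is a (shortened) Hamming-type code of parameters $[n_0,k,\ge 3]_2$, giving $N(k,1)\le \mathrm{NL}(k,1)\le n_0=\mathrm{Ham}(k,1)$. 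Combined with the lower bound this yields equality, and incidentally the stronger chain $\mathrm{NL}(k,1)=N(k,1)=\mathrm{Ham}(k,1)$.
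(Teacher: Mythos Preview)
Your proposal is correct and follows essentially the same approach as the paper: both construct a parity-check matrix with pairwise distinct nonzero columns (placing the standard basis vectors among them to force full rank), using the inequality $n_0+1\le 2^{n_0-k}$ to guarantee enough columns are available, and invoke the sphere-packing bound for the matching lower bound. Your write-up is slightly more explicit in separating the two inequalities and in flagging the rank issue, but the underlying construction is identical.
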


\begin{proof}
Let $n$ be the codeword length such that $2^{\,n-k} > n$. We construct a linear code with check matrix 
\[
H \;\in\; M_{(n-k)\times n}(\FF_2),
\]
of rank $n-k$, in which any two columns are linearly independent. Over $\FF_2$, this means there is no zero columns and no two identical columns.

Place the standard basis vectors in the first $n-k$ columns. Each remaining column is filled with a nonzero vector not yet used. This is possible because $2^{\,n-k} > n$ implies there are more nonzero vectors than columns. The rank of $H$ is $n-k$ thanks to the first $n-k$ columns.

Hence the only condition on $n$ is
\[
2^{\,n-k} > n 
\;\iff\; 
2^{\,n-k} \ge n+1 
\;\iff\; 
n-k \;\ge\; \log_2(n+1) 
\;\iff\; 
k \;\le\; n \;-\;\log_2(n+1).
\]
The Hamming bound is necessary for the existence of such a code, and this construction shows it is sufficient.
\end{proof}

Therefore,
\[
N(k,1) \;=\; \min\bigl\{\,n\in\NN \;\bigm|\; n - \log_2(n+1)\;\ge\; k \bigr\}.
\]

\begin{lemma}{\label{lemma:delta}}
The following statements hold
\begin{enumerate}
\item[(a)] $N(1, 1) = 3$;
\item[(b)] $1 \leq N(k + 1, 1) - N(k, 1) \leq 2$;
\item[(c)] $N(k + 1, 1) - N(k, 1) = 2 \iff k = 2^s - s - 1$ for some $s \geq 2$.
\end{enumerate}
\end{lemma}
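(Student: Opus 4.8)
The plan is to reparametrize $N(k,1)$ by its redundancy, which collapses all three parts into the behaviour of a single step function. Recall from the preceding proposition that $N(k,1)$ is the least $n$ with $2^{\,n-k}\ge n+1$. Writing $r=n-k$ for the number of check bits, this condition becomes $2^{r}\ge k+r+1$, i.e. $g(r)\ge k$ where $g(r):=2^{r}-r-1$. Hence
\[
N(k,1)=k+r(k),\qquad r(k):=\min\{\,r\in\NN : g(r)\ge k\,\},
\]
so that the quantity governing (b) and (c) is
\[
N(k+1,1)-N(k,1)=1+\bigl(r(k+1)-r(k)\bigr).
\]
Everything therefore reduces to understanding how $r(k)$ increments.

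First I would record the elementary properties of $g$. A direct computation gives the exact increment $g(r+1)-g(r)=2^{r}-1$, which is $\ge 1$ for all $r\ge 1$, and the base values $g(0)=g(1)=0$, $g(2)=1$. Since $g$ is integer-valued and advances by at least $1$ past $r=1$, the thresholds $g(2)<g(3)<g(4)<\cdots$ split the values $k\ge 1$ into consecutive nonempty blocks, with $r(k)=r$ exactly when $g(r-1)<k\le g(r)$. Two consequences follow at once: $r(k)$ is nondecreasing, and $r(k+1)\le r(k)+1$, because $g(r(k)+1)\ge g(r(k))+1\ge k+1$ already forces the minimum defining $r(k+1)$ to be at most $r(k)+1$. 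Thus $r(k+1)-r(k)\in\{0,1\}$. I would also note that $g(0)=g(1)=0<k$ for every $k\ge 1$, so $r(k)\ge 2$ always.

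For (a), evaluating $g(1)=0<1\le 1=g(2)$ yields $r(1)=2$, whence $N(1,1)=1+2=3$. For (b), the range $r(k+1)-r(k)\in\{0,1\}$ just established gives $N(k+1,1)-N(k,1)\in\{1,2\}$, which is the assertion. For (c), the difference equals $2$ precisely when $r(k+1)=r(k)+1$; by the block description this jump occurs exactly when $k$ is the top element of its block, i.e. $k=g(r(k))$. Setting $s:=r(k)\ge 2$ gives $k=g(s)=2^{s}-s-1$. Conversely, given $k=2^{s}-s-1$ with $s\ge 2$, I would verify $r(k)=s$ from $g(s-1)=k-(2^{s-1}-1)<k=g(s)$, and check that $k+1$ satisfies $g(s)<k+1\le g(s+1)$, so $r(k+1)=s+1$ and the jump is exactly $2$.

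The main obstacle is the nonemptiness of every block, equivalently the strict-increase bound $g(r+1)-g(r)=2^{r}-1\ge 1$. If some block were empty, $r(k)$ could rise by $2$ in one step and the gap $N(k+1,1)-N(k,1)$ could reach $3$, which would break (b); so the crux is this clean increment formula for $g$, after which parts (a)--(c) are pure bookkeeping with the block structure.
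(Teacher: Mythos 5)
Your proof is correct, and it takes a genuinely different route from the paper's. The paper works throughout with the real-valued form of the bound, $n-\log_2(n+1)\ge k$: part (b) is proved by two chains of logarithmic estimates (one from minimality of $n=N(k,1)$, one from perturbing the defining inequality), and part (c) by a separate squeeze $\log_2(n+1)\le n-k<\log_2(n+2)$ that forces $2^{n-k}=n+1$. You instead substitute $r=n-k$ and work with the integer-valued function $g(r)=2^r-r-1$, so that $N(k,1)=k+r(k)$ with $r(k)=\min\{r : g(r)\ge k\}$; the exact increment $g(r+1)-g(r)=2^r-1$ then gives the block decomposition of $k$-values, and (a), (b), (c) all fall out of the single observation that $r(k)$ steps up by at most one and does so exactly at block tops $k=g(r(k))$. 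What your approach buys: it eliminates logarithms and strict-versus-nonstrict inequality bookkeeping entirely, unifies the three parts under one mechanism, and makes the ``jump exactly at $k=2^s-s-1$'' phenomenon structurally transparent rather than the outcome of a case-specific squeeze. It also dovetails neatly with the paper's Theorem~\ref{thm:main_formula}: the quantity $s(k)$ defined there (the largest integer with $2^s-s-1<k$) is precisely your $r(k)-1$, so your reparametrization proves the decomposition $N(k,1)=k+1+s(k)$ as a by-product rather than as a separate consequence of the lemma. What the paper's route buys is directness: it manipulates the Hamming-bound inequality in the exact form in which it was derived, with no change of variables, which keeps the argument close to the preceding proposition at the cost of heavier inequality-chasing.
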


\begin{proof}
\textbf{(a)} For $k=1$, the minimal $n$ satisfying the Hamming bound $\log_2(n+1) + 1 \leq n$ is as follows. For $n=2$: $1 + \log_2 3 > 2$, for $n=3$: $2 + \log_2 4 = 4 \leq 4$ which holds.

\medskip

Next denote $n = N(k, 1)$. \\
\textbf{(b)} We perform a technical check:
\begin{alignat*}{2}
&(n - 1) - \log_2 n < k 
  &&\implies n - \log_2 n < k + 1 \\
&\quad &&\implies n - \log_2(n + 1) < k + 1 \\
&\quad &&\implies N(k+1, 1) > n \\
&\quad &&\implies N(k + 1, 1) \geq N(k, 1) + 1.
\end{alignat*}
On the other hand:
\begin{align*}
n - \log_2(n + 1) \geq k 
  &\implies n + 2 - \log_2(n + 1) \geq k + 2 \\
  &\implies n + 2 - \log_2(n + 3) \geq k + 1 \\
  &\implies N(k + 1, 1) \leq N(k, 1) + 2.
\end{align*}

\medskip

\noindent\textbf{(c)}  ($\Rightarrow$) Suppose $N(k + 1, 1) = n + 2$. Then
\[
n - \log_2(n + 1) \geq k, \quad (n+1) - \log_2(n+2) < k + 1.
\]
From the second inequality: $n < \log_2(n+2) + k$, whence
\[
\log_2(n+1) \leq n - k < \log_2(n+2) \implies n + 1 \leq 2^{\,n - k} < n + 2 \implies 2^{\,n - k} = n + 1
\]
Since we work only with integers and $n \geq N(1, 1) = 3$, it follows that $n = 2^s - 1$ for $s \geq 2$:
\[
2^{\,2^s - 1 - k} = 2^s \implies  2^s - 1 - k = s \implies k = 2^s - s - 1.
\]
($\Leftarrow$) Let $k = 2^s - s - 1$. Then $N(k, 1) = 2^s - 1$, since
\begin{align*}
(2^s - 1) - \log_2(2^s) &= 2^s - 1 - s = k, \\[5pt]
2^s - 2 - \log_2(2^s - 1) &< (2^s - 1) - \log_2(2^s) = k.
\end{align*}
The last remaining step is
\begin{align*}
2^s - \log_2(2^s + 1) &< 2^s - s = k + 1 \\
&\implies N(k + 1, 1) > N(k, 1) + 1 \\
&\implies N(k + 1, 1) - N(k, 1) = 2.
\end{align*}
\end{proof}

\begin{figure}[H]
    \centering
    \includegraphics[width=0.75\linewidth]{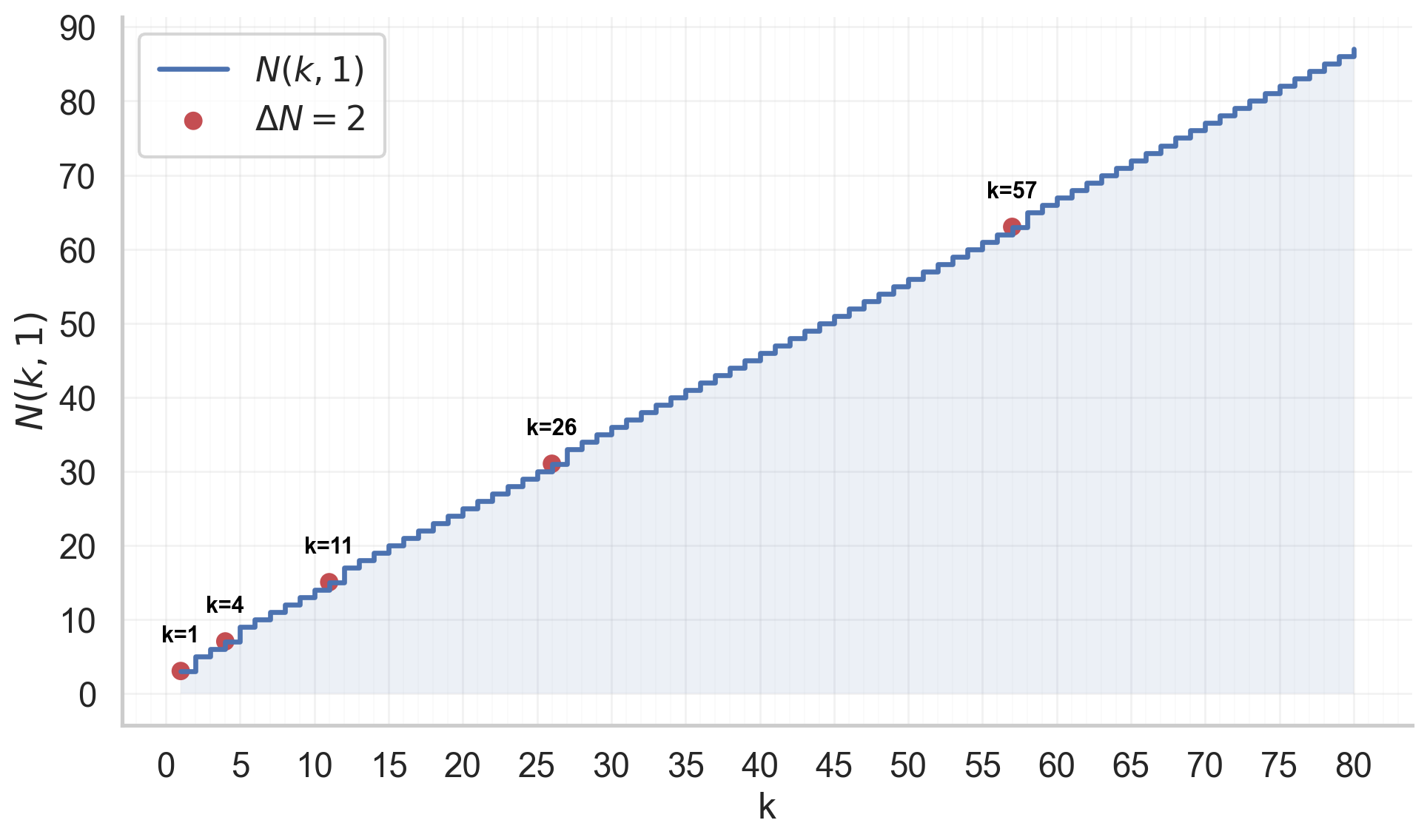}
    \caption*{Staircase plot of $N(k,1)$ for $1 \le k \le 80$. 
    Steps of height $2$ occur exactly at $k=2^s-s-1$ (Lemma~\ref{lemma:delta}).}
    \label{fig:staircase}
\end{figure}

\begin{theorem}\label{thm:main_formula}
The following formula holds:
\[
\mathrm{NL}(k, 1) = N(k, 1) = k + 1 + \left\lfloor \log_2\bigl(k + 1 + \lfloor \log_2 k \rfloor\bigr) \right\rfloor.
\]
\end{theorem}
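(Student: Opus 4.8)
The plan is to lean on the variational description of $N(k,1)$ obtained just above, namely that $N(k,1)$ is the least $n$ with $2^{\,n-k}\ge n+1$, and to check that the closed‑form candidate hits this minimum exactly. Abbreviate $L:=\lfloor \log_2 k\rfloor$ and $M:=\lfloor \log_2(k+1+L)\rfloor$, so that the asserted value is $n_0:=k+1+M$. By minimality, proving $N(k,1)=n_0$ amounts to showing that $n_0$ obeys the Hamming inequality while $n_0-1$ fails it; after substituting $n_0-k=M+1$ and clearing the logarithm, these two requirements become the integer statements
\[
\text{(A)}\quad 2^{\,M+1}\ge k+M+2,\qquad\qquad \text{(B)}\quad 2^{\,M}<k+1+M .
\]
Everything then reduces to establishing (A) and (B).

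The first genuine step is to control the nested floors. The two definitions give the working brackets $2^{L}\le k<2^{L+1}$ and $2^{M}\le k+1+L<2^{M+1}$, and from these I would prove the sharp two-sided estimate $L\le M\le L+1$: the lower bound from $k+1+L>k\ge 2^{L}$, the upper bound from $k+1+L\le 2^{L+1}+L<2^{L+2}$ (valid for every $L\ge 0$). This collapses the problem to a single threshold, namely whether $k+1+L<2^{L+1}$, in which case $M=L$, or $k+1+L\ge 2^{L+1}$, in which case $M=L+1$.

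With the dichotomy in hand, the verification of (A) and (B) is a short case analysis. If $M=L$ then $k\le 2^{L+1}-L-2$, which gives (A) directly via $k+M+2=k+L+2\le 2^{L+1}=2^{M+1}$, while (B) follows from the chain $2^{M}=2^{L}\le k<k+1+M$. If instead $M=L+1$ then $k\ge 2^{L+1}-L-1$, so $k+1+M=k+L+2\ge 2^{L+1}+1>2^{M}$, which is (B); and (A) reduces, after bounding $k\le 2^{L+1}-1$, to the elementary inequality $L+2\le 2^{L+1}$, true for all $L\ge 0$. In both branches (B) comes out with a strict gap, so no equality can occur in the lower inequality.

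The point I expect to require the most care is precisely the bracketing $L\le M\le L+1$ together with the identification of the correct threshold: the formula is mildly self-referential, since the inner floor $L$ stands in for the (a priori unknown) number of check bits that appears inside the outer logarithm, and one must be sure that this substitution introduces no off-by-one error. Concretely, the argument must confirm that the single transition $M\colon L\to L+1$ lines up exactly with the true jump of the defining inequality, rather than lagging or leading it by one. As consistency checks I would evaluate the formula at $k=1,2$ and match the transitions $M\colon L\to L+1$ against Lemma~\ref{lemma:delta}, where the steps of height two at $k=2^{s}-s-1$ should correspond precisely to the indices at which $M$ increments.
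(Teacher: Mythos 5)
Your proof is correct, but it takes a genuinely different route from the paper's. The paper derives the formula from the staircase structure of Lemma~\ref{lemma:delta}: since $N(k,1)$ grows by $1$ or $2$ as $k$ increases, with jumps of $2$ exactly at $k=2^s-s-1$, one gets $N(k,1)=k+1+s(k)$, where $s(k)$ is the largest integer with $2^{s(k)}-s(k)-1<k$, and the proof then identifies $s(k)$ with the nested floor $\lfloor \log_2(k+1+\lfloor\log_2 k\rfloor)\rfloor$ by a case analysis on the position of $k$ relative to $2^s$ (checking $k\le 4$ by hand and assuming $s\ge 3$). You instead verify directly that $n_0=k+1+M$ is the minimum of the defining set $\{n : 2^{\,n-k}\ge n+1\}$, via the bracketing $L\le M\le L+1$ and the two integer inequalities (A) and (B); I checked both branches of your dichotomy and they are sound, including the boundary case $L=0$, where (A) holds with equality (this is exactly $k=1$, $N(1,1)=3$), and the threshold between the branches does line up with the jump points $k=2^s-s-1$ of Lemma~\ref{lemma:delta}. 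Your route buys uniformity and economy: it needs no separate treatment of small $k$, bypasses Lemma~\ref{lemma:delta} entirely, and reduces everything to integer inequalities where off-by-one errors are easy to audit; the paper's route, by contrast, explains \emph{why} the formula holds (it exhibits the staircase and counts its double steps), which your argument does not illuminate. Two small points you should make explicit to be complete: (i) concluding $N(k,1)=n_0$ from ``$n_0$ works and $n_0-1$ fails'' uses that the set $\{n : 2^{\,n-k}\ge n+1\}$ is upward closed, which follows from $2^{\,n+1-k}=2\cdot 2^{\,n-k}\ge 2(n+1)\ge n+2$; (ii) the theorem also asserts $\mathrm{NL}(k,1)=N(k,1)$, which your write-up never addresses --- it needs the one-line remark that the code constructed in the Proposition attaining the Hamming bound is linear, so $N(k,1)\le \mathrm{NL}(k,1)\le \mathrm{Ham}(k,1)=N(k,1)$.
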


\begin{proof}
As follows from part~(b) of Lemma~\ref{lemma:delta}, when $k$ increases by one, the value $N(k, 1)$ increases by 1 or 2. According to part~(c) of the same lemma, an increase by 2 occurs precisely for those $k$ of the form $k = 2^s - s - 1$ for some $s$. Thus,
\[
N(k, 1) = k + 1 + s(k),
\]
where $s(k)$ is the largest integer satisfying $2^{\,s(k)} - s(k) - 1 < k$.

We will show that
\[
s(k) = \left\lfloor \log_2\bigl(k + 1 + \lfloor \log_2 k \rfloor \bigr) \right\rfloor.
\]
For small values $k \leq 4$, the equality can be easily verified by direct computation. Henceforth assume $k \geq 5$, and fix such $s \geq 3$ that $s(k) = s$. Then $2^s - s - 1 < k \leq 2^{s+1} + s$. We consider two cases.

In the first case, when $2^s - s - 1 < k < 2^s$, we get $\lfloor \log_2 k \rfloor = s - 1$. Then
\[
k + 1 + \lfloor \log_2 k \rfloor \;\in\; [2^s,\, 2^s + s - 1]
\;\implies\;
\log_2\bigl(k + 1 + \lfloor \log_2 k \rfloor\bigr) \;\in\; [s,\, \log_2(2^s + s - 1)].
\]
Since $s \geq 3$, $\log_2(2^s + s) < s + 1$, hence
\[
\left\lfloor \log_2\bigl(k + 1 + \lfloor \log_2 k \rfloor \bigr) \right\rfloor = s.
\]

In the second case, if $2^s \leq k \leq 2^{s+1} + s$, then $\lfloor \log_2 k \rfloor = s$, and
\[
k + 1 + \lfloor \log_2 k \rfloor \;\in\; [2^s + s + 1,\, 2^{s+1} + 2s + 1].
\]
Noting that $2^{s+1} > 2^s + 2s + 1$ for $s \geq 3$, we conclude
\[
\left\lfloor \log_2\bigl(k + 1 + \lfloor \log_2 k \rfloor \bigr) \right\rfloor = s.
\]

Hence, in both cases, the value $s(k)$ coincides with the right‐hand side of the formula in the theorem, which completes the proof.
Since the constructed code is linear, it follows that $\mathrm{NL}(k,1) = N(k,1)$.
\end{proof}

\section{The General Case}

\begin{lemma}\label{lemma:basic_values}
Let $C$ be a linear $[n,k,2e+1]_2$‐code. Then
\begin{enumerate}[label=(\alph*)]
  \item vectors of weight $0$, $1$, $\ldots$, $e$ belong to distinct cosets modulo $C$;
  \item cosets of weight $0$, $1$, $\ldots$, $e - 1$ do not contain any vector of weight $e + 1$;
  \item any coset contains at most $\bigl\lfloor \tfrac{n}{e + 1}\bigr\rfloor$ distinct vectors of weight $e+1$;
  \item a coset of weight $e$ contains at most $\bigl\lfloor \tfrac{n-e}{e+1}\bigr\rfloor$ distinct vectors of weight $e+1$;
\end{enumerate}
\end{lemma}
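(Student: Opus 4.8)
The unifying principle throughout is that two vectors lie in the same coset precisely when their sum is a codeword, and every nonzero codeword has weight at least $2e+1$ because the minimum distance is $2e+1$. The plan is to exploit the elementary identity $w(u+v) = w(u) + w(v) - 2\,|\mathrm{supp}(u)\cap\mathrm{supp}(v)|$ over $\FF_2$, which controls how short a sum of two vectors can be. Parts (a)–(b) will follow from the plain triangle inequality, while (c)–(d) will refine it into a support-packing argument.

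For part (a), I would argue by contradiction: if distinct vectors $u,v$ of weight at most $e$ shared a coset, then $u+v$ would be a nonzero codeword with $w(u+v)\le w(u)+w(v)\le 2e<2e+1$, contradicting the minimum distance. Part (b) uses the same idea with an asymmetric split: if a coset of weight at most $e-1$ contained a vector $y$ of weight $e+1$, I would take a minimum-weight representative $x$ (so $w(x)\le e-1$), and then $x+y$ is a nonzero codeword of weight at most $(e-1)+(e+1)=2e$, again impossible.

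For parts (c) and (d) I would first establish that any two distinct weight-$(e+1)$ vectors $y_i,y_j$ in a common coset have disjoint supports: their sum is a nonzero codeword, so $2e+1\le w(y_i+y_j)=2(e+1)-2\,|\mathrm{supp}(y_i)\cap\mathrm{supp}(y_j)|$, which forces the intersection to be empty. Since the supports are then pairwise disjoint subsets of $\{1,\dots,n\}$, each of size $e+1$, there can be at most $\lfloor n/(e+1)\rfloor$ of them, giving (c). For (d) I would additionally invoke a minimum-weight representative $x$ with $w(x)=e$; applying the identity to the nonzero codeword $x+y_i$ yields $2e+1\le e+(e+1)-2\,|\mathrm{supp}(x)\cap\mathrm{supp}(y_i)|=2e+1-2\,|\mathrm{supp}(x)\cap\mathrm{supp}(y_i)|$, so each $y_i$ also has support disjoint from that of $x$. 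Packing $x$ together with all $m$ of the $y_i$ into the $n$ coordinates gives $e+m(e+1)\le n$, hence $m\le\lfloor(n-e)/(e+1)\rfloor$.

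The individual steps are routine; the only point demanding care is the passage from the weight bound to \emph{exact} support-disjointness in (c)–(d). In (c) the inequality yields only $|\mathrm{supp}(y_i)\cap\mathrm{supp}(y_j)|\le\tfrac12$, and one must observe that an intersection cardinality is a nonnegative integer in order to conclude it is $0$; the analogous integrality remark (and the boundary case where the bound is tight) is what makes (d) work. Once disjointness is secured, both bounds reduce to a one-line counting of coordinates.
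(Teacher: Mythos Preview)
Your argument is correct and follows essentially the same route as the paper: parts (a)--(b) via the triangle inequality $w(u+v)\le w(u)+w(v)$, and parts (c)--(d) by showing that same-coset vectors of the relevant weights must have pairwise disjoint supports and then counting coordinates. The only cosmetic difference is that you invoke the exact identity $w(u+v)=w(u)+w(v)-2\,|\mathrm{supp}(u)\cap\mathrm{supp}(v)|$ and the integrality of the intersection size explicitly, whereas the paper phrases the same step as ``if they shared any support bit, then $w(c)\le 2e$''; the content is identical.
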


\begin{proof}
\textbf{(a)} By contradiction, that two distinct vectors $u_1,u_2$ of weights $\le e$ lie in the same coset: $u_1\equiv u_2\pmod C$, so $c:=u_1-u_2\in C\setminus\{0\}$. Then 
\[
w(c)=w(u_1+u_2)\le w(u_1)+w(u_2)\le 2e,
\]
contradicting that the minimum distance is $2e + 1$.

\medskip
\noindent\textbf{(b)} Suppose $u$ has weight $\le e-1$ and $v$ has weight $e+1$, and $v\equiv u\pmod C$. Then $c=v-u\in C\setminus\{0\}$ has 
\[
w(c)\le w(v)+w(u)<e+1+e=2e+1,
\]
again contradicting $d=2e+1$. Hence no coset of weight $0$, $1$, $\ldots$, $e-1$ can contain a vector of weight $e+1$.

\medskip
\noindent\textbf{(c)} Let $v_1,v_2$ be two distinct vectors of weight $e+1$ in the same coset: $v_1\equiv v_2\pmod C$. Then $c=v_1-v_2\in C\setminus\{0\}$ has 
\[
w(c)\le w(v_1)+w(v_2)=2e+2,
\]
but since $v_1$ and $v_2$ are both of weight $e+1$, if they shared any support bit, then 
\[
w(c)\le e+1+e+1-2=2e<2e+1,
\]
contradicting $d=2e+1$. Therefore any two weight‐$(e+1)$ vectors in the same coset should have disjoint supports. Since each has support of size $e+1$, there can be at most $\lfloor n/(e+1)\rfloor$ of them in a single coset.

\medskip
\noindent\textbf{(d)} Fix a vector $u$ of weight $e$, and consider any vector $v$ of weight $e+1$ in the same coset: $v=u+c$ for some $c\in C$. Since $d=2e+1$, $c$ should have weight $\ge e+1$. If $v$ and $u$ shared any nonzero coordinate, then $c=v-u$ would have weight 
\[
w(c)\le w(v)+w(u)-2 = e+1+e-2=2e-1<2e+1,
\]
a contradiction. Thus $v$ should have its $e+1$ nonzero positions outside the two nonzero positions of $u$, leaving $n-e$ available coordinates. By the same disjoint‐support argument as in (c), there can be at most $\lfloor (n-e)/(e+1)\rfloor$ weight‐$(e+1)$ vectors in that coset.
\end{proof}

\begin{theorem}\label{th:tilted}
In any linear $[n,k,2e+1]_2$‐code, the following inequality holds:
\[
\left\lfloor \frac{n}{\,e+1\,} \right\rfloor
\!\left( 2^{\,n-k}
\;-\;
\sum_{i=0}^{e} \binom{n}{i}
\right)
\;+\;
\left\lfloor \frac{n-e}{\,e+1\,} \right\rfloor
\binom{n}{e}
\;\ge\;
\binom{n}{e+1}.
\]
\end{theorem}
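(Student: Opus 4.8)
The plan is to prove this by a double‐counting (pigeonhole) argument: I will count the weight‐$(e+1)$ vectors of $\FF_2^n$, whose total number is $\binom{n}{e+1}$, by distributing them across the $2^{n-k}$ cosets of $C$ and bounding the contribution of each coset via Lemma~\ref{lemma:basic_values}. The left‐hand side of the inequality will then appear as an upper bound on $\binom{n}{e+1}$.

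First I would classify the cosets by their weight. By part~(a) of Lemma~\ref{lemma:basic_values}, each vector of weight $i \le e$ lies in its own coset and shares it with no vector of smaller weight; hence for every $i \in \{0,1,\ldots,e\}$ there are exactly $\binom{n}{i}$ cosets of weight $i$. Consequently the number of cosets of weight $\ge e+1$ is exactly
\[
2^{\,n-k} \;-\; \sum_{i=0}^{e}\binom{n}{i},
\]
which is precisely the coefficient multiplying $\lfloor n/(e+1)\rfloor$ on the left.

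Next I would bound, class by class, how many weight‐$(e+1)$ vectors each coset can hold. Part~(b) shows that cosets of weight $0,1,\ldots,e-1$ contain none. Part~(d) bounds each of the $\binom{n}{e}$ cosets of weight $e$ by $\lfloor (n-e)/(e+1)\rfloor$. Part~(c) bounds each of the remaining $2^{\,n-k}-\sum_{i=0}^{e}\binom{n}{i}$ cosets (those of weight $\ge e+1$) by $\lfloor n/(e+1)\rfloor$; for cosets of weight $\ge e+2$ this bound is merely loose, since such cosets actually contain no weight‐$(e+1)$ vector, but it remains valid. Summing these per‐coset bounds over all cosets yields
\[
\binom{n}{e+1}
\;\le\;
\left\lfloor \frac{n-e}{e+1}\right\rfloor\binom{n}{e}
\;+\;
\left\lfloor \frac{n}{e+1}\right\rfloor\!\left(2^{\,n-k}-\sum_{i=0}^{e}\binom{n}{i}\right),
\]
which is exactly the claimed inequality after transposing terms.

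The argument is essentially bookkeeping once the lemma is available, so I expect no deep obstacle; the only step demanding care is the coset accounting. One must verify from part~(a) that the cosets of weight $\le e$ are exactly the $\sum_{i=0}^{e}\binom{n}{i}$ cosets represented by the low‐weight vectors, so that the complementary count $2^{\,n-k}-\sum_{i=0}^{e}\binom{n}{i}$ is correct and every weight‐$(e+1)$ vector is assigned to exactly one of the three classes, with no double counting or omission.
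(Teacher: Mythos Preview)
Your proposal is correct and follows essentially the same double-counting argument as the paper: classify the $2^{n-k}$ cosets by weight via Lemma~\ref{lemma:basic_values}(a), then bound the number of weight-$(e{+}1)$ vectors in each class using parts (b), (d), and (c), and compare the sum to $\binom{n}{e+1}$. Your write-up is in fact slightly more explicit than the paper's, e.g.\ in noting that cosets of weight $\ge e{+}2$ contribute nothing and that the bound from (c) is merely loose there.
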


\begin{proof}
There are $2^{\,n-k}$ cosets of $C$ in $\mathbb{F}_2^n$. 
By Lemma~\ref{lemma:basic_values}(a), the vectors of weights $0,1,\ldots,e$ lie in distinct cosets, and there are
\[
\sum_{i=0}^{e} \binom{n}{i}
\]
such vectors. Hence the number of remaining cosets with no vector of weight $\le e$ is
\[
2^{\,n-k} - \sum_{i=0}^{e} \binom{n}{i}.
\]
By Lemma~\ref{lemma:basic_values}(c), each such coset contains at most 
$\lfloor n/(e+1)\rfloor$ vectors of weight $e+1$.
Meanwhile, each coset of weight $e$ can contain at most 
$\lfloor (n-e)/(e+1)\rfloor$ vectors of weight $e+1$ (Lemma~\ref{lemma:basic_values}(d)), 
and there are $\binom{n}{e}$ cosets of weight $e$.
Therefore, the total number of weight‐$(e+1)$ vectors covered by all cosets is at most
\[
\left\lfloor \frac{n}{\,e+1\,} \right\rfloor
\!\left( 2^{\,n-k} - \sum_{i=0}^{e} \binom{n}{i} \right)
\;+\;
\left\lfloor \frac{n-e}{\,e+1\,} \right\rfloor \binom{n}{e}.
\]
But the total number of all weight‐$(e+1)$ vectors in $\mathbb{F}_2^n$ is $\binom{n}{e+1}$. 
Hence, the stated inequality follows.
\end{proof}

\medskip
Define
\[
C(k,e)\;:=\;\min\Bigl\{\,n\in\NN \;\Bigm|\;
\left\lfloor \frac{n}{e+1} \right\rfloor
\Bigl(2^{\,n-k}-\!\sum_{i=0}^{e}\binom{n}{i}\Bigr)
+\left\lfloor \frac{n-e}{e+1}\right\rfloor\binom{n}{e}
\;\ge\;\binom{n}{e+1}\Bigr\},
\]
that is, the smallest $n$ (for fixed $k$ and $e$) for which the inequality of Theorem~\ref{th:tilted} holds.  
This gives a lower bound on the length of any linear code:
\[
\mathrm{NL}(k,e) \;\ge\; C(k,e).
\]

\medskip

\begin{table}[h]
\centering

\begin{minipage}{0.47\textwidth}
\centering
\caption{Instances where $C(k,2)>\mathrm{Ham}(k,2)$.}
\label{tab:C2}
\begin{tabular}{rcc}
\toprule
$k$ & $\mathrm{Ham}(k,2)$ & $C(k,2)$ \\
\midrule
14   & 22  & 23  \\
22   & 31  & 32  \\
78   & 90  & 91  \\
114  & 127 & 128 \\
345  & 361 & 362 \\
494  & 511 & 512 \\
1427 & 1447& 1448\\
\bottomrule
\end{tabular}
\end{minipage}
\hfill
\begin{minipage}{0.47\textwidth}
\centering
\caption{Instances where $C(k,3)>\mathrm{Ham}(k,3)$.}
\label{tab:C3}
\begin{tabular}{rcc}
\toprule
$k$ & $\mathrm{Ham}(k,3)$ & $C(k,3)$ \\
\midrule
17    & 29  & 30  \\
32    & 46  & 47  \\
43    & 58  & 59  \\
57    & 73  & 74  \\
75    & 92  & 93  \\
98    & 116 & 117 \\
127   & 146  & 147  \\
\bottomrule
\end{tabular}
\end{minipage}

\end{table}

For most values of $k$, the number $C(k,e)$ coincides with $\mathrm{Ham}(k,e)$.
However, for certain parameters $(k,e)$ the coset–based bound of Theorem~\ref{th:tilted} is slightly stronger.
Tables~\ref{tab:C2} and~\ref{tab:C3} show several examples where $C(k,e)>\mathrm{Ham}(k,e)$.

\end{document}